\theoremstyle{plain}
\newtheorem{thm}{Theorem}[section]
\newtheorem{cor}[thm]{Corollary}
\newtheorem{lem}[thm]{Lemma}
\newtheorem{Def}[thm]{Definition}
\newtheorem{obs}[thm]{Observation}
\definecolor{darkgreen}{rgb}{0,0.5,0}
\crefname{listing}{Program-code}{Program-codes}  
\crefname{theorem}{Theorem}{Theorems}
\Crefname{lemma}{Lemma}{Lemmas}
\Crefname{observation}{Observation}{Observations}
\Crefname{definition}{Definition}{Definitions}
\Crefname{Def}{Definition}{Definitions}
\algnewcommand\algorithmicswitch{\textbf{switch}}
\algnewcommand\algorithmiccase{\textbf{case}}
\newcommand{\eps}{\varepsilon}
\newcommand{\poly}{\operatorname{\text{{\rm poly}}}}
\renewcommand{\paragraph}[1]{\vspace{0.15cm}\noindent {\bf #1}:}
\newcommand{\FullOrShort}{short}
	\newcommand{\fullOnly}[1]{#1}
	\newcommand{\shortOnly}[1]{}
\newcommand{\fullOnly}[1]{}
\newcommand{\IncludePictures}[1]{}
\title{Simplified and Space-Optimal Semi-Streaming for ($2+\epsilon$)-Approximate Matching}
\author[1]{Mohsen Ghaffari}
\author[2]{David Wajc}
\affil[1]{ETH Zurich}
\affil[2]{Carnegie Mellon University} 
\begin{document}

\maketitle
\begin{abstract}
	In a recent breakthrough, Paz and Schwartzman (SODA'17) presented a single-pass ($2+\eps$)-approximation algorithm for the maximum weight matching problem in the semi-streaming model. Their algorithm uses $O(n\log^2 n)$ bits of space, for any constant $\eps>0$. 
	
	We present two simplified and more intuitive analyses, for essentially the same algorithm, which also improve the space complexity to the optimal bound of $O(n\log n)$ bits --- this is optimal as the output matching requires $\Omega(n\log n)$ bits. Our analyses rely on a simple use of the primal dual method and a simple accounting method.
\end{abstract}

\section{Introduction and Related Work}
The maximum weight matching (MWM) problem is a classical optimization problem, with diverse applications, which has been studied extensively since the 1965 work of Edmonds \cite{edmonds1965paths}. Naturally, this problem has received significant attention also in the \emph{semi-streaming} model. This is a modern model of computation, introduced by Feigenbaum et al.\cite{feigenbaum2005graph}, which is motivated by the need for processing massive graphs whose edge set cannot be stored in memory. In this model, roughly speaking, the edges of the graph arrive in a stream, and the algorithm should process this stream and eventually output its solution --- a matching in our case --- while using a small memory at all times. Ideally, this memory size should be close to what is needed for storing just the output. More formally, the setting of the MWM problem in the semi-streaming matching is as follows.

\medskip
\paragraph{MWM in the Semi-Streaming Model} Let $G=(V, E, w)$ be a simple graph with non-negative edge weights $w:E\rightarrow \mathbb{R}_{>0}$ (for notational simplicity, we let $w_e=w(e)$). Let $n=|V|$, $m=|E|$. We assume that the edge weights are normalized so that the minimum edge weight is $1$ and we use $W=\max_e w_e$ to denote the maximum edge weight. In the semi-streaming model, the input graph $G$ is provided as a stream of edges. In each iteration, the algorithm receives an edge from the stream and processes it. The algorithm has a memory much smaller than $m$ and thus it cannot store the whole graph. The amount of the memory that the algorithm uses is called its \emph{space complexity} and we wish to keep it as small as possible. The objective in the semi-streaming \emph{maximum weight matching} (MWM) problem is that, at the end of the stream, the algorithm outputs a matching whose weight is close to the weight of the maximum weight matching, denoted by $M^* = \arg\max_{\text{matching } M} w(M)$, where $w(M) = \sum_{e\in M} w_e$ is the weight of matching $M$.

\medskip
\paragraph{State of the Art} There has been a sequence of successively improved approximation algorithms for MWM in the semi-streaming model. Feigenbaum et al.~gave a $6$ approximation\cite{feigenbaum2005graph}, McGregor gave a $5.828$ approximation\cite{mcgregor2005finding} (and a $2+\eps$ approximation, but using $O(1/\eps^3)$ passes on the input), Epstein et al. gave a $4.911+\eps$ approximation\cite{epstein2011improved}, Crouch and Stubbs improved this bound to $4+\eps$ \cite{crouch2014improved}. However, these approximations remained far from the more natural and familiar $2$ approximation which the sequential greedy method provides. 

In a recent breakthrough, Paz and Schwartzman\cite{PS17-StreamingMatching} presented a truly simple algorithm that achieves an approximation of $2+\eps$ for any constant $\eps>0$, using $O(n\log^2 n)$ bits of space. More concretely, the algorithm maintains $O(n\log n)$ edges, while working through the stream, and at the end, it computes a matching using these maintained edges.

\medskip
\paragraph{Our Contribution} We present alternative analyses for (a slight variant of) the algorithm of Paz and Schwartzman, which have two advantages: (1) they imply that keeping merely $O(n)$ edges suffices, and thus improve the space complexity to $O(n\log n)$ bits, which is optimal, (2) they provide a more intuitive and also more flexible accounting of the approximation. 

Concretely, Paz and Schwartzman\cite{PS17-StreamingMatching} used an extension of \emph{the Local Ratio theorem}\cite{bar2001unified, bar2004local} to analyze their algorithm. We instead present two different simpler arguments, which are more flexible and allow us to improve the space-complexity to obtain the optimal space bound. Our analyses rely on the primal dual method and an intuitive \emph{blaming} argument. The main appeal of the primal dual method is in the simple explanation it provides for the extension of the local ratio technique in \cite{PS17-StreamingMatching} using little more than LP duality.
The main appeal of the blaming argument is that it uses completely elementary arguments and can be taught in any algorithms class.

\medskip
\paragraph{Roadmap} In \Cref{sec:review}, as a warm up, we review (a simple version of) the algorithm of Paz and Schwartzman. In \Cref{sec:Analysis}, we present simpler and more intuitive styles of analysis for this algorithm. In \Cref{sec:improve}, we present the improved algorithm that achieves a space complexity of $O(n\log n)$, the analysis of which relies crucially on the flexibility of the analyses presented in \Cref{sec:Analysis}.

\section{Reviewing the Algorithm of Paz and Schwartzman}
\label{sec:review}
\subsection{The Basic Algorithm}
The starting point in the approach of Paz and Schwartzman\cite{PS17-StreamingMatching} is the following basic yet elegant algorithm, implicit in \cite[Section 3]{bar2001unified}. For the sake of explanation, consider a sequential model of computation. We later discuss the adaptation to the streaming model.

\begin{center}
	\begin{minipage}{0.95\linewidth}
		\vspace{-2pt}
		\begin{mdframed}[hidealllines=true, backgroundcolor=gray!20]
			\paragraph{Basic Algorithm} Repeatedly select an edge $e$ with positive weight; reduce its weight from itself and its neighboring edges; push $e$ into a stack and continue to the next edge, so long as edges with positive weight remain. At the end; unwind the stack and add the edges greedily to the matching.
		\end{mdframed}
	\end{minipage}
\end{center}

In \Cref{sec:Analysis} we will see that this simple algorithm is $2$-approximate.

\paragraph{Implementing the Basic Algorithm in the Semi-Streaming Model} To implement this while streaming, we just need to remember a parameter $\phi_v$ for each node $v$. This parameter is the total sum of the weight already reduced from the edges incident on vertex $v$, due to edges incident on $v$ that were processed and put in the stack before.  We assume for now that the space requirement of storing these $n$ numbers is only $O(n \log n)$ bits (say, due to the edge weights being at most $\poly(n)$) and discuss the space requirement of these $\phi_v$ values at the end of the paper.

\subsection{The Algorithm with Exponentially Increasing Weights}\label{sec:expAlg}
The space complexity of the basic algorithm can become $\Theta(n^2)$, as it may end up pushing $\Theta(n^2)$ edges into the stack. This brings us to the clever idea of Paz and Schwartzman \cite{PS17-StreamingMatching}, which reduces the space complexity to the equivalent of keeping $O(n\log W/\eps)$ edges, where $W$ is the normalized maximum edge weight, while still providing a $(2+\eps)$-approximation. 

The idea is to ensure that the edges incident on each node $v$ that get pushed into the stack have exponentially increasing weights, by factors of $(1+\eps)$. Thus, at most $O(\log_{1+\eps} W) = O((\log W)/\eps)$ edges per node are added to the stack, and the overall number of edges in the stack is at most $O((n\log W)/\eps)$. 

To attain this exponential growth, the method is as follows: When reducing the weight of an edge $e$ from each neighboring edge $e'$, we will decide between deducting either $w_e$ or $(1+\eps) w_e$ from the weight $w_{e'}$. In general, this can be any arbitrary decision.  This arbitrary choice may seem mysterious, but as we shall see in \Cref{sec:Analysis}, the particular choice is rather natural when considered in terms of LP duality. In the streaming model, this decision is done when we first see $e'=\{u,v\}$ in the stream, as follows.
\begin{itemize}
	\item If $w_{e'} < (1+\eps)\cdot (\phi_u+\phi_{v})$ --- i.e., if $e'$ has less than ($1+\eps$) times of the total weight of the stacked up edges incident on $u$ or $v$ --- then we deduct $(1+\eps) w_e$ from $w_{e'}$ for each stacked up edge $e$ incident on $u$ or $v$. Hence, we effectively  reduce the weight of $w_{e'}$ by $(1+\eps)\cdot (\phi_u+\phi_{v})$. This implies that we get left with an edge $e'$ of negative weight, which can be ignored without putting it in the stack. 
	\item Otherwise, if $w_{e'} \geq  (1+\eps)\cdot (\phi_u+\phi_{v})$, we deduct only $w_e$ from $w_{e'}$, for each edge $e$ incident on $u$ or $v$ that is already in the stack. Thus, in total, we deduct only $(\phi_u+\phi_{v})$ weight from $w_{e'}$ for the previously stacked edges. Thus, now we have an edge $e'$ whose leftover weight is $w'_{e'} \geq (1+\eps)\cdot (\phi_u+\phi_{v})- (\phi_u+\phi_{v}) = \eps\cdot  (\phi_u+\phi_{v})$. Then, we add $e'$ to the stack, and thus $\phi_u$ and $\phi_{v}$ increase by $w'_{e'}$. Therefore, both $\phi_u$ and $\phi_{v}$ increase by at least a $(1+\eps)$ multiplicative factor. 
\end{itemize}
The concrete algorithm that formalizes the above scheme is presented in \Cref{alg:PSAlg}.  

\begin{algorithm}[h]
	\caption{The Paz-Schwartzman Algorithm \cite{PS17-StreamingMatching} with Exponentially Increasing Weights}
	\label[algorithm]{alg:PSAlg}
	\begin{algorithmic}[1]
		\Statex
		\State $S\gets \textit{emptystack}$
		\State $\forall v\in V: \phi_v=0$
		\For{$e=\{u, v\} \in E$} 
		\If{$w_e < (1+\eps)\cdot (\phi_u+\phi_{v})$}\label{line:covered}
		\State continue \Comment{skip to the next edge}
		\EndIf
		\State $w'_e \gets w_e - (\phi_u+\phi_{v})$
		\State $\phi_u \gets \phi_u + w'_e$ \label{line:phi_u}
		\State $\phi_{v} \gets \phi_{v} + w'_e$ \label{line:phi_v}
		\State $S.push(e)$
		\EndFor
		\Statex \vspace{-0.25cm}
		\State $M \gets \emptyset$
		\While{$S\neq \emptyset$} 
		\State $e\gets S.pop()$
		\If{$M\cap N(e) = \emptyset$}
		$M\gets M\cup \{e\}$
		\EndIf
		\EndWhile
		\State\Return $M$
	\end{algorithmic}
\end{algorithm}

\begin{obs}\label[observation]{obs:exponentialIncrease} When an edge $e=\{u, v\}$ is added to the stack, the value of $\phi_u$ increases by a $1+\eps$ factor.
\end{obs}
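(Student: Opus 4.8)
The plan is to unwind the single iteration of \Cref{alg:PSAlg} in which $e=\{u,v\}$ is pushed onto the stack, using only the branching condition on line~\ref{line:covered} and the update on line~\ref{line:phi_u}. Throughout, let $\phi_u,\phi_v$ denote the values of these counters at the \emph{start} of that iteration. First I would note that the mere fact that control reaches line~\ref{line:phi_u} (rather than the \texttt{continue} on line~\ref{line:covered}) means the test $w_e < (1+\eps)(\phi_u+\phi_v)$ failed, i.e.\ $w_e \ge (1+\eps)(\phi_u+\phi_v)$.

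Next I would lower-bound the quantity $w'_e = w_e - (\phi_u+\phi_v)$ that is about to be added to both counters: substituting the inequality above gives $w'_e \ge (1+\eps)(\phi_u+\phi_v) - (\phi_u+\phi_v) = \eps(\phi_u+\phi_v)$. Since every $\phi$ counter is initialized to $0$ and only ever incremented (lines~\ref{line:phi_u}--\ref{line:phi_v}), all counters are nonnegative at all times; in particular $\phi_v \ge 0$, so $w'_e \ge \eps\,\phi_u$. Then line~\ref{line:phi_u} replaces $\phi_u$ by $\phi_u + w'_e \ge \phi_u + \eps\,\phi_u = (1+\eps)\phi_u$, which is precisely the asserted multiplicative increase. (The borderline case $\phi_u=0$ makes the statement vacuous, and I would simply remark on it in passing.)

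I do not expect any genuine obstacle here: the observation is an immediate consequence of the ``otherwise'' branch's guard, and the whole content is the two-line chain of inequalities above. The only reason to isolate it as a separate statement is its downstream use --- iterating it shows that $\phi_u$ can take at most $O(\log_{1+\eps} W) = O((\log W)/\eps)$ distinct positive values, hence at most that many stack-pushed edges are incident to any fixed vertex $u$, which is exactly the bound needed to control the space complexity in \Cref{sec:expAlg}.
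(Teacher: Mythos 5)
Your argument is correct and matches the paper's (implicit) reasoning exactly: the paper establishes this fact in the prose of Section~2.2 by observing that the guard on line~\ref{line:covered} forces $w'_e \ge \eps(\phi_u+\phi_v) \ge \eps\phi_u$ before the push, and you reproduce that two-line chain verbatim, including the aside about nonnegativity of the counters and the trivial $\phi_u=0$ case. Nothing to add.
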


The above observation also implies that the edges incident on each node that are added to the stack have an exponential growth in weight, which directly implies that the total number of edges pushed to the stack cannot exceed $O(n\log_{1+\eps} W) = O((n\log W)/\eps)$. 
In \Cref{sec:Analysis}, we prove that this algorithm is $2(1+\eps)$-approximate.

\section{Simplified Analyses}\label{sec:Analysis}
In this section we give two simple analyses of \Cref{alg:PSAlg}, proving it yields a $2(1+\eps)$ approximation of the MWM. Note that the basic algorithm above can be seen as \Cref{alg:PSAlg} run with $\eps=0$, and so we obtain that the basic algorithm is a $2$-approximate algorithm. 
\begin{lem}\label[lemma]{lem:expAlg} The matching $M$ returned by \Cref{alg:PSAlg} is a $2(1+\eps)$ approximation of the maximum weight matching.
\end{lem}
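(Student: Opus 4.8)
The plan is to analyze \Cref{alg:PSAlg} via LP duality. I would first recall the standard matching LP: maximize $\sum_e w_e x_e$ subject to $\sum_{e \ni v} x_e \le 1$ for all $v$, and $x_e \ge 0$; its dual assigns a variable $y_v \ge 0$ to each vertex, minimizes $\sum_v y_v$, subject to $y_u + y_v \ge w_e$ for every edge $e=\{u,v\}$. Since the matching polytope's LP relaxation has integrality gap $2$ on general graphs (and is exact on bipartite graphs), any feasible dual $y$ gives $w(M^*) \le \sum_v y_v$; moreover if one is willing to lose the factor $2$ one can even use a half-integral or fractional bound. The goal is to exhibit a dual-feasible (or approximately dual-feasible) vector whose total value is at most roughly $2(1+\eps)\,w(M)$ for the output matching $M$.

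The natural candidate is $y_v := (1+\eps)\,\phi_v$ (using the final values of $\phi_v$ at the end of the stream). First I would check approximate dual feasibility: for an edge $e=\{u,v\}$, either $e$ was skipped on \Cref{line:covered}, in which case $w_e < (1+\eps)(\phi_u+\phi_v) \le (1+\eps)(\phi_u^{\mathrm{final}}+\phi_v^{\mathrm{final}}) = y_u+y_v$ since the $\phi$ values only increase; or $e$ was pushed, in which case immediately after pushing $\phi_u + \phi_v \ge w'_e + (\phi_u^{\mathrm{old}}+\phi_v^{\mathrm{old}}) = w_e$, so again $y_u+y_v \ge \phi_u+\phi_v \ge w_e$. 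Hence $y$ is a genuinely feasible dual solution and $w(M^*) \le \sum_v y_v = (1+\eps)\sum_v \phi_v$.

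Next I would bound $\sum_v \phi_v$ by $2\,w(M)$. Here is where the stack-unwinding matters. Each time an edge $e=\{u,v\}$ is pushed, $\phi_u$ and $\phi_v$ each grow by $w'_e$, so $\sum_v \phi_v^{\mathrm{final}} = 2\sum_{e \in S} w'_e$. It therefore suffices to show $\sum_{e\in S} w'_e \le w(M)$. This is a ``charging'' argument on the stack: process the stack in pop order (reverse of push order) and charge the quantity $w'_e$ of each stacked edge $e$ to an edge of $M$. When $e$ is popped, either $e$ is added to $M$ — charge $w'_e$ to $e$ itself — or $e$ conflicts with some already-chosen $f \in M$ sharing an endpoint, which was popped earlier hence pushed later, so that $w'_f \ge \phi$-value-at-the-shared-endpoint-at-the-time-$f$-was-pushed $\ge$ (the contribution that already included $w'_e$, since $e$ was pushed before $f$); more carefully, when $f$ was pushed the endpoint shared with $e$ had $\phi$-value at least $w'_e$ already accumulated, and $w'_f$ was set to be at least $\eps(\phi_u+\phi_v) \ge 0$ — I need the cleaner fact that the $w'$ values of edges sharing a fixed endpoint, taken in stack order, telescope against the $\phi$ growth. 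The key inequality is: for any vertex $v$, the edges incident to $v$ on the stack, listed from bottom to top, have $w'$ values that sum to $\phi_v^{\mathrm{final}}$, and each edge $f$ incident to $v$ that is \emph{not} the bottom-most has $w'_f \ge (1+\eps)(\text{sum of } w' \text{ of lower edges at } v)$ is not quite right either — rather $w'_f \ge \eps \cdot (\text{that sum})$ won't close the factor $2$. The honest statement I will use is simply: each matching edge $f$ absorbs the charge of at most the edges it ``blocked'' at each of its two endpoints, and the total $w'$ blocked at one endpoint of $f$ is at most $\phi$-at-that-endpoint-when-$f$-was-pushed $\le w'_f$ (because $w'_f = w_f - (\phi_u+\phi_v)$ and $w_f \ge (1+\eps)(\phi_u+\phi_v)$ forces $w'_f \ge \eps(\phi_u+\phi_v)$, which alone is too weak, so instead I observe the stronger consequence that $w'_f \ge \phi_u + \phi_v$ holds when $\eps \ge 1$ — false for small $\eps$).

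So the clean route is the \emph{blaming}/induction-on-the-stack argument rather than a per-endpoint telescoping: pop edges one at a time; maintain the invariant that for the current partial matching $M'$, $\sum_{e \text{ popped so far}} w'_e \le \sum_{f \in M'} w_f$ — wait, I want $w'$ on the right too. Let me state it as: after popping a prefix, $\sum_{\text{popped } e} w'_e \le w(M')$ where $M'$ is the partial matching, OR more robustly $\le \sum_{f\in M'} (\text{total } w' \text{ charged to } f)$ with each $f$ charged at most $2 w'_f$-worth... and then summing $\sum_{f\in M} 2w'_f \le \sum_{f \in M} 2 w_f = 2w(M)$, giving $\sum_v \phi_v \le 2 w(M)$ and hence $w(M^*) \le (1+\eps)\sum_v\phi_v \le 2(1+\eps) w(M)$, as desired. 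The main obstacle — and the part I would spend the most care on — is pinning down exactly this charging invariant so that the factor is exactly $2$ and no $\eps$-dependence sneaks in: specifically, showing that when an edge $e$ is discarded during unwinding because it hits $f \in M$, the accumulated $w'$ at the shared endpoint (which includes $w'_e$ and everything pushed between $e$ and $f$ at that endpoint) is at most $w'_f$, because at the moment $f$ was pushed, $\phi$ at that endpoint already equalled that accumulated sum and $w'_f = w_f - (\phi_u + \phi_v) \le w_f$ — but I need a lower bound on $w'_f$ in terms of $\phi$, which is exactly $w'_f \ge \eps(\phi_u+\phi_v)$; reconciling this with wanting a clean factor of $2$ is the crux, and I expect the resolution is that one charges each discarded edge to the matching edge that discarded it and uses that $\sum_{e \text{ discarded at endpoint } v \text{ of } f} w'_e \le \phi_v^{(f)} \le$ (something) — I'll work out that the right bookkeeping is to charge directly to $w_f$, not $w'_f$, exploiting $\phi_u^{(f)} + \phi_v^{(f)} \le w_f$, which holds because $f$ was pushed. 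That gives $\sum_{e \in S} w'_e \le w(M)$ directly, completing the proof.
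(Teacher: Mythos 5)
Despite the meandering and false starts, your final argument is essentially the paper's duality-based proof (the paper gives both a duality proof and a blaming proof; yours is the former): you verify that $(1+\eps)\vec\phi$ is dual-feasible so $w(M^*)\le(1+\eps)\sum_v\phi_v$, and then bound $\sum_v\phi_v = 2\sum_{e\in S}w'_e$ by $2\,w(M)$ by charging each stacked edge's $w'$-value to the matching edge that blocked it during unwinding, using that a blocked edge was pushed \emph{before} its blocker and hence its $w'$ is already absorbed into $\phi$ at the shared endpoint. One precision worth noting: at the end you say to exploit $\phi_u^{(f)}+\phi_v^{(f)}\le w_f$, but what you actually need is the exact relation $w'_f = w_f - (\phi_u^{(f)}+\phi_v^{(f)})$ from Line 7 of \Cref{alg:PSAlg}, so that the total charge to a matching edge $f$ — namely its own $w'_f$ \emph{plus} the $w'$-sum of neighbors it blocked, the latter being at most $\phi_u^{(f)}+\phi_v^{(f)}$ — totals at most $w_f$; this is precisely the content of the paper's \Cref{lem:weight-vs-dual-change}.
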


\subsection{Duality-based Analysis}\label{sec:dualitySimple}
	The first observation our duality-based proofs rely on is that $\vec{\phi_v}$ forms a (nearly) feasible solution to the dual of the LP relaxation of the MWM problem, in \Cref{fig:mwm-lp}. 
	Indeed, this fact is not accidental, and it is the underlying reason for the choice of when to decrease weights of an edge neighboring a processed edge $e$ by $(1+\eps)w_e$ or by $w_e$.

	\begin{figure}[h]
		\begin{small}
			\begin{center}
				\begin{tabular}{rl|rl}
					\multicolumn{2}{c|}{Primal} & \multicolumn{2}{c}{Dual} \\ \hline  
					maximize & $\sum_{e\in E} w_{e}\cdot x_{e}$ &
					minimize & $\sum_{v\in V}{y_v}$ \\
					subject to: & & subject to: & \\
					
					$\forall v\in V$: &  $\sum_{e\ni v}x_{e} \leq 1$ &
					$\forall \{u,v\}\in E$: & $y_u + y_v \geq w_{\{u,v\}}$ \\
					
					$\forall e\in E$: & $x_{e} \geq 0$ & $\forall v\in V$: & $y_v \geq0$
				\end{tabular}
			\end{center}
		\end{small}
		\vspace{-0.5cm}
		\caption{The LP relaxation of MWM and its dual}
		\label{fig:mwm-lp}
	\end{figure} 
	
	\begin{obs}\label[observation]{obs:approx-dual}
		The variables $\{\phi_v\}_{v\in V}$ scaled up by $1+\eps$ form a feasible dual solution.
	\end{obs}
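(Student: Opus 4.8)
The plan is to show that for every edge $e = \{u,v\} \in E$, the inequality $(1+\eps)(\phi_u + \phi_v) \geq w_e$ holds at the end of the algorithm, since then scaling each $\phi_v$ by $(1+\eps)$ yields dual variables $y_v = (1+\eps)\phi_v \geq 0$ satisfying all the dual constraints $y_u + y_v \geq w_{\{u,v\}}$. The key point is that the $\phi_v$ values are monotonically non-decreasing throughout the execution of \Cref{alg:PSAlg} (they only ever increase, via lines \ref{line:phi_u} and \ref{line:phi_v}, and $w'_e \geq 0$ whenever an edge is pushed), so it suffices to check the inequality at the moment each edge $e$ is processed.

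First I would fix an arbitrary edge $e = \{u,v\}$ and consider the iteration of the main \textbf{for} loop in which $e$ is handled, letting $\phi_u, \phi_v$ denote the values of these variables at the start of that iteration. There are two cases, corresponding exactly to the branch taken in line \ref{line:covered}. In the first case, $w_e < (1+\eps)(\phi_u + \phi_v)$, so the dual constraint for $e$ is already satisfied at that moment (indeed strictly), and by monotonicity it remains satisfied at the end. In the second case, $w_e \geq (1+\eps)(\phi_u + \phi_v)$, which is itself precisely the dual constraint for $e$ with the scaled variables, evaluated at that iteration; again monotonicity of $\phi_u, \phi_v$ carries it through to termination.

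Since $e$ was arbitrary, every dual constraint is satisfied, and together with $\phi_v \geq 0$ for all $v$ (each $\phi_v$ starts at $0$ and only increases by non-negative amounts $w'_e$), the vector $\big((1+\eps)\phi_v\big)_{v \in V}$ is a feasible solution to the dual LP in \Cref{fig:mwm-lp}. I do not anticipate a genuine obstacle here — the statement is essentially a restatement of the case distinction defining the algorithm — but the one point requiring care is making the monotonicity argument explicit: one must note that processing a later edge can only increase $\phi_u$ and $\phi_v$ (never decrease them), so that an inequality established at the iteration processing $e$ persists, and that $w'_e \geq 0$ exactly when we are in the branch that pushes $e$, which is what keeps the increments non-negative and the final values well-defined.
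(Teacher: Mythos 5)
Your case~2 has the inequality backwards. The condition $w_e \geq (1+\eps)(\phi_u + \phi_v)$ that causes the algorithm to push $e$ onto the stack is the \emph{opposite} of the dual constraint $(1+\eps)(\phi_u + \phi_v) \geq w_e$; it is not ``precisely the dual constraint.'' At the start of the iteration in case~2, the constraint for $e$ is in fact violated (or at best tight), not satisfied. What makes the observation true is the update performed in lines~\ref{line:phi_u}--\ref{line:phi_v}: after incrementing, with $w'_e = w_e - (\phi_u^{old} + \phi_v^{old}) \geq 0$, one has
$$\phi_u^{new} + \phi_v^{new} = \phi_u^{old} + \phi_v^{old} + 2w'_e = w_e + w'_e \geq w_e,$$
so the dual constraint for $e$ holds even without the $(1+\eps)$ factor. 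This is exactly the calculation the paper's proof carries out. You correctly identified the role of monotonicity in carrying the constraint to termination and case~1 is handled correctly, but the argument has a genuine gap: in case~2, feasibility of the constraint for $e$ must be established from the \emph{post-update} values of $\phi_u, \phi_v$ via the algebra above, not read off from the branching condition, which points the wrong way.
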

	\begin{proof}
		Each edge $e=\{u,v\}\in E$ has it its dual constraint satisfied by $(1+\eps)\cdot \vec{\phi}$ after inspection; i.e., $w_e\leq (1+\eps)\cdot (\phi_u + \phi_{v})$. This constraint is  either satisfied before $e$ is inspected, or after performing lines \ref{line:phi_u} and \ref{line:phi_v}, following which the  new and old values of $\phi_u$ and $\phi_{v}$ satisfy $w_e \leq w_e + w'_e = (\phi^{old}_u+\phi^{old}_{v}+w'_e)+w'_e = \phi^{new}_u + \phi^{new}_{v} \leq (1+\eps)\cdot (\phi^{new}_u + \phi^{new}_{v})$.  As the $\phi_v$ variables never decrease, each dual constraint is satisfied by $(1+\eps)\cdot \vec{\phi}$ in the end.
	\end{proof}
	
	By LP duality, the above implies the following upper bound on the optimal matching.
	\begin{cor}\label[corollary]{cor:phi_ub_MWM}
		The weight of any matching $M^*$ satisfies $$w(M^*) \leq OPT(LP) \leq (1+\eps)\cdot \sum_v \phi_v.$$
	\end{cor}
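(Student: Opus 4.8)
The plan is to obtain the two inequalities separately and then concatenate them. The left inequality $w(M^*)\le OPT(LP)$ is just the statement that the primal LP in \Cref{fig:mwm-lp} is a relaxation of MWM; the right inequality $OPT(LP)\le (1+\eps)\sum_v\phi_v$ is weak LP duality applied to the feasible dual solution supplied by \Cref{obs:approx-dual}.

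For the relaxation step, I would take the indicator vector $x^*\in\{0,1\}^E$ of the matching $M^*$, i.e.\ $x^*_e=1$ iff $e\in M^*$. Because $M^*$ is a matching, each vertex $v$ meets at most one of its edges, so $\sum_{e\ni v}x^*_e\le 1$, and trivially $x^*_e\ge 0$; hence $x^*$ is feasible for the primal, and since the primal maximizes $\sum_e w_e x_e$ we get $OPT(LP)\ge \sum_e w_e x^*_e = w(M^*)$.

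For the duality step, \Cref{obs:approx-dual} already guarantees that $\{(1+\eps)\phi_v\}_{v\in V}$ is a feasible solution of the dual LP, whose objective value is $\sum_{v\in V}(1+\eps)\phi_v=(1+\eps)\sum_v\phi_v$. By weak LP duality (the objective of any primal-feasible point is at most the objective of any dual-feasible point for this max/min pair), $OPT(LP)\le (1+\eps)\sum_v\phi_v$. Chaining this with the previous paragraph yields $w(M^*)\le OPT(LP)\le (1+\eps)\sum_v\phi_v$.

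I do not expect any genuine obstacle here: the corollary is essentially a one-line consequence of \Cref{obs:approx-dual} and weak duality. The only points that require a moment's care are bookkeeping ones --- remembering that the primal is a maximization, so ``relaxation'' means the LP value is an upper bound on $w(M^*)$, and that for a (max primal, min dual) pair weak duality orients as primal-feasible $\le$ dual-feasible --- together with the observation that we never need strong duality or an optimal dual solution, merely the one particular feasible solution $(1+\eps)\vec{\phi}$.
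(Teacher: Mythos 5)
Your proposal is correct and matches the paper's intent exactly: the paper states the corollary as an immediate consequence of \Cref{obs:approx-dual} via LP duality, and you have simply spelled out the two standard steps (the primal LP is a relaxation of MWM, and weak duality applied to the feasible dual point $(1+\eps)\vec{\phi}$). No gaps.
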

	
	Let $\Delta\phi^e$ be the change to $\sum_{v\in V} \phi_v$ in Lines \ref{line:phi_u} and \ref{line:phi_v} of the algorithm during the inspection of edge $e$. Note that if the algorithm does not reach these lines when inspecting edge $e$ (due to the test in Line \ref{line:covered}), then we have $\Delta \phi^e=0$. By definition, at the time that the algorithm terminates, $\sum_v \phi_v = \sum_e \Delta \phi^e$. The following lemma relates the weight of an edge $e$ to $\Delta\phi^{e'}$ of edges $e'$ incident on a common vertex with $e$ (including $e$) inspected no later than $e$.
	
	\begin{lem}\label[lemma]{lem:weight-vs-dual-change}
		For each edge $e=\{u,v\}\in E$ added to the stack $S$, if we denote its preceding neighboring edges by $\mathcal{P}(e)=\{e' \mid e'\cap e \neq \emptyset,\, e' \mbox{ inspected no later than } e\}$, then $$w_e \geq \frac{1}{2}\cdot \sum_{e'\in \mathcal{P}(e)} \Delta\phi^{e'} = \sum_{e'\in \mathcal{P}(e)} w'_{e'}.$$
	\end{lem}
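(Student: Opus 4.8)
The plan is to unwind the definition of $\Delta\phi^{e'}$ and $w'_{e'}$ and then split the sum over $\mathcal{P}(e)$ according to which endpoint of $e$ a preceding neighboring edge touches. First I would record the easy equality: whenever an edge $e'$ is pushed to the stack, Lines~\ref{line:phi_u} and~\ref{line:phi_v} each add $w'_{e'}$, so $\Delta\phi^{e'} = 2w'_{e'}$; and if $e'$ is skipped then $\Delta\phi^{e'}=0=w'_{e'}$ by convention. Hence $\frac12\sum_{e'\in\mathcal{P}(e)}\Delta\phi^{e'} = \sum_{e'\in\mathcal{P}(e)} w'_{e'}$, which disposes of the second equality in the statement for free; all the content is in the inequality $w_e \ge \sum_{e'\in\mathcal{P}(e)} w'_{e'}$.

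For the inequality, write $e=\{u,v\}$ and partition $\mathcal{P}(e)$ into the edges meeting $u$ and the edges meeting $v$ (the edge $e$ itself lies in both, so I will have to be a little careful not to double count, or I can put $e$ on one side only). The key point is that the edges $e'\in\mathcal{P}(e)$ incident to $u$, \emph{other than $e$ itself}, are exactly the stacked edges that were incident to $u$ at the moment $e$ was inspected, together with possibly skipped edges contributing $w'_{e'}=0$; so their weights $w'_{e'}$ sum to precisely the value of $\phi_u$ at the time $e$ is inspected, call it $\phi_u^{\mathrm{old}}$. Likewise for $v$. Adding $e$'s own contribution $w'_e$, I get $\sum_{e'\in\mathcal{P}(e)} w'_{e'} = \phi_u^{\mathrm{old}} + \phi_v^{\mathrm{old}} + w'_e$. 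Now recall that $e$ was pushed, so the algorithm reached Line~7 and set $w'_e = w_e - (\phi_u^{\mathrm{old}}+\phi_v^{\mathrm{old}})$, i.e. $\phi_u^{\mathrm{old}}+\phi_v^{\mathrm{old}}+w'_e = w_e$. Therefore $\sum_{e'\in\mathcal{P}(e)} w'_{e'} = w_e$ — in fact the inequality holds with equality, which is even stronger than claimed.

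The main obstacle I anticipate is bookkeeping rather than mathematics: making sure the set $\mathcal{P}(e)$ is accounted for without over- or under-counting $e$, and making sure that "inspected no later than $e$" lines up exactly with "contributed to $\phi_u^{\mathrm{old}}$ or is $e$ itself." A subtle point: an edge $e'$ inspected before $e$ and incident to $u$ might have been \emph{skipped} (Line~5), in which case $\Delta\phi^{e'}=0$ and $w'_{e'}$ should be read as $0$; the statement's convention $\Delta\phi^{e'}=0$ for skipped edges handles this, and I should state explicitly that for such edges we take $w'_{e'}=0$ so the displayed equality $\frac12\Delta\phi^{e'}=w'_{e'}$ is literally true termwise. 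Another point to state carefully is that $\phi_u$ only ever increases, so its value "at the time $e$ is inspected" is well-defined and equals the partial sum of $w'_{e'}$ over precisely the stacked $u$-edges preceding $e$. With those conventions pinned down, the proof is a two-line computation, so I would present it as: (i) termwise $\frac12\Delta\phi^{e'}=w'_{e'}$; (ii) $\sum_{e'\in\mathcal{P}(e)}w'_{e'} = \phi_u^{\mathrm{old}}+\phi_v^{\mathrm{old}}+w'_e$; (iii) $=w_e$ by the definition of $w'_e$ on Line~7; and then note $w_e\ge\frac12\sum\Delta\phi^{e'}$ follows a fortiori (indeed with equality).
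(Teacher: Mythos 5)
Your proof is correct and takes essentially the same route as the paper's: both note the termwise identity $\tfrac12\Delta\phi^{e'}=w'_{e'}$, identify $\phi_u+\phi_v$ at the moment $e$ is inspected with the sum of $w'_{e'}$ over the prior incident edges, and close with $w_e = w'_e + (\phi_u+\phi_v)$ from Line~7. The only cosmetic difference is that you assert equality outright, whereas the paper writes $\phi_u^e+\phi_v^e \geq \frac12\sum_{e'\in\mathcal{P}(e)\setminus\{e\}}\Delta\phi^{e'}$ and remarks parenthetically that this is an equality absent parallel edges; since $G$ is assumed simple, your version is correct and slightly sharper, though the inequality is all that is used downstream.
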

	\begin{proof}\vspace{-0.1cm}
		First, we note that $\Delta\phi^e = 2w'_e$, due to Lines \ref{line:phi_u} and \ref{line:phi_v}, or put otherwise $w'_e=\frac{1}{2}\cdot \Delta\phi^e$. On the other hand, if we denote the values of $\phi_u$ and $\phi_v$ before 
		the inspection of $e$ by $\phi_u^e$ and $\phi_v^e$, respectively, we have that $\phi_u^e + \phi_v^e \geq \frac{1}{2}\cdot  \sum_{e'\in \mathcal{P}(e)\setminus \{e\}} \Delta\phi^{e'}$ (this inequality is an equality unless $G$ contains edges parallel to $e$). Consequently, we have that indeed 
		$$
		w_e = w'_e + (\phi_u^e + \phi_v^e) \geq \frac{1}{2}\cdot \sum_{e'\in \mathcal{P}(e)} \Delta \phi^{e'} = \sum_{e'\in \mathcal{P}(e)} w'_{e'}.\qedhere
		$$
	\end{proof}

	\begin{proof}[Proof of \Cref{lem:expAlg}] \vspace{-0.05cm} By the algorithm's definition, every edge ever added to $S$ and not taken into $M$ has a previously-inspected neighboring edge taken into $M$. So, by \Cref{lem:weight-vs-dual-change} and \Cref{cor:phi_ub_MWM} we have $w(M)=\sum_{e \in M}w_e \geq \frac{1}{2}\sum_e \Delta\phi^e = \frac{1}{2}\sum_v \phi_v \geq \frac{1}{2(1+\eps)}\cdot w(M^*)$. In other words, the matching $M$ output by \Cref{alg:PSAlg} a $2(1+\eps)$-approximate MWM, as claimed. 
\end{proof}

\subsection{Blaming-based Analysis}\label{sec:blamingSimple}

\begin{proof}[Proof of \Cref{lem:expAlg}] 
Fix some matching $M'$. We prove that $\sum_{e'\in M'} w(e') \leq 2(1+\eps) \sum_{e\in M} w(e)$. We can think of the left hand side summation as being made of $w(e')$ units of ``\emph{blame}'' on each edge $e'\in M'$.	We think of ``blame'' as a mass of matter which we can move around while preserving its total amount. We show that there exists a way of transferring this blame onto the edges of $M$, in a manner that preserves the total blame, and such that at the end, each edge $e\in M$ receives at most $2(1+\eps) w(e)$ blame. Since the total blame is always preserved, this directly implies that $\sum_{e'\in M'} w(e') \leq 2(1+\eps)\sum_{e\in M} w(e)$. In a sense, we are \emph{blaming} the fact that we did not take edges of $M'$ on the edges of $M$. Taking $M'=M^*$, this implies that the matching $M$ output by \Cref{alg:PSAlg} is a $2(1+\eps)$ approximation of the MWM.
	
	
	Consider the time that we remove an edge $e_{1}=\{u,v\}$ and put $e_1$ in the stack. We may deduct either $w_{e_1}$ or $(1+\eps) w_{e}$ from each of the edges incident on $u$ or $v$. That is, for each incident edge $e'$, the leftover weight is either $w_{e'}-w_{e_1}$ or $w_{e'}-(1+\eps)w_{e_1}$. Let us use $w_{r}(e')$ to denote this leftover weight of $e'$. We then get left with a new graph $G_r$ with updated weights $w_{r}$. On the rewind of the stack, when processing $e_1$, we have two possibilities: 
	\medskip
	
	\begin{minipage}{.9\textwidth}
		\begin{itemize} \item[\textbf{(A)}] the matching $\mathcal{M}_r$ computed on $G_r$ has an edge $e''$ incident on $v$ or $v'$, or 
			\item[\textbf{(B)}] the matching $\mathcal{M}_r$ has no edge incident on $u$ or $v$, thus we will add $e_1$ to the matching of $G$. 
		\end{itemize}
	\end{minipage}
	\medskip
	
	\noindent Suppose by induction that on $G_r$, there is a way of transferring a blame of $w_{r}(e')$ units on each edge $e'\in M'$ onto the edges of $\mathcal{M}_r$ such that each edge $e\in \mathcal{M}_r$ receives at most $2(1+\eps)w_{r}(e)$ blame. The matching $M'$ contains at most two edges $e_2$ and $e_3$ incident on $u$ or $v$. By the inductive assumption, for each $e_{i}\in \{e_2, e_3\}$, we have already found room for placing the $w_{r}(e_i)\geq w_{e_i}-(1+\eps)w_{e_1}$ part of the blame on edges of $\mathcal{M}_r$. Since for these two edges we are trying to transfer $w_{e_2}$ and $w_{e_3}$ units of blame onto the computed matching, what remains to be done is to find room for at most $\big(w_{e_2}-w_{r}(e_2)\big) + \big(w_{e_3}-w_{r}(e_3)\big) \leq 2 (1+\eps) w_{e_1}$ more blame. In case (A), edge $e'' \in \mathcal{M}_r$ has at most $2(1+\eps)w_{r}(e'')\leq 2(1+\eps)(w_{e''}-w_{e_1})$ blame on it at the moment. But in $G$, edge $e''$ has room for $2(1+\eps)w_{e''}$ units of blame. Hence, in this case, we have room for that $2(1+\eps) w_{e_1}$ extra blame to be placed on $e''$. In case (B), the edge $e_1$ is a fresh addition to the matching with zero blame on it at the moment. Thus, we again have room for placing the $2(1+\eps)w_{e_1}$ extra blame, this time on $e_1$, while ensuring that each edge receives a blame of at most $2(1+\eps)$ times its weight. 
\end{proof} 

\section{Improved Algorithm}
\label{sec:improve}
The $(2+\epsilon)$-approximate algorithm of the previous section stores $O(n\log W)$ edges for any constant $\epsilon>0$. To improve the space complexity, we would like to keep only $O(n)$ edges, which is asymptotically the bare minimum necessary for keeping just a matching. For that purpose, we will limit the number of edges incident on each vertex $v$ that are in the stack to a constant $\beta=\frac{6\log 1/\eps}{\eps}+1$. When there are more edges, we will take out the earliest one and remove it from the stack. This will be easy to implement using a queue $Q(v)$ for each of vertex $v$, where we keep the length of the $Q(v)$ capped at $\beta$. The pseudo-code is presented in \Cref{alg:OptSpaceAlg}. We will prove in the following that this cannot hurt the approximation factor more than just increasing the parameter $\eps$ by a constant factor.

\begin{algorithm}[h]
	\caption{The Space-Optimal Algorithm}
	\label[algorithm]{alg:OptSpaceAlg}
	\begin{algorithmic}[1]
		\Statex
		\State $S\gets \textit{empty stack}$
		\State $\forall v\in V: Q(v) \gets \textit{empty queue}$
		\State $\forall v\in V: \phi_v=0$
		\For{$e=\{u, v\} \in E$} 
		\If{$w_e < (1+\eps)\cdot (\phi_u+\phi_{v})$}
		\State continue \Comment{skip to the next edge}
		\EndIf
		\State $w'_e \gets w_e - (\phi_u+\phi_{v})$
		\State $\phi_u \gets \phi_u + w'_e$
		\State $\phi_{v} \gets \phi_{v} + w'_e$
		\State $S.push(e)$
		\For{$x\in \{u, v\}$} \label{line:big-queue-start}
		\State $Q(x)$.enqueue($e$)
		\If{$|Q(x)| > \beta = \frac{3\log(1/\eps)}{\eps}+1$} 
		\State $e'\gets Q(x).dequeue()$
		\State remove $e'$ from the stack $S$
		\EndIf \label{line:big-queue-end}
		\EndFor
		\EndFor
		\Statex \vspace{-0.25cm}
		\State $M \gets \emptyset$
		\While{$S\neq \emptyset$} 
		\State $e\gets S.pop()$
		\If{$M\cap N(e) = \emptyset$}
		$M\gets M\cup \{e\}$
		\EndIf
		\EndWhile
		\State\Return $M$
	\end{algorithmic}
\end{algorithm}

\paragraph{Remark} Paz and Schwartzman\cite{PS17-StreamingMatching} used a similar algorithmic idea to keep only $O(n\log n)$ edges in total, instead of $O(n\log W)$ edges.
\footnote{\label{unbounded-W}We note that keeping $O(n\log n)$ edges can be done in a much simpler way, by remembering the maximum edge weight $W_{max}$ observed so far in the stream and discarding all edges of weight below $\delta= \eps W_{\max}/(2(1+\eps)n^2)$. This 
	effectively narrows the range of weights that \Cref{alg:PSAlg} sees to $W'=O(n^2/\eps)$, making its related space complexity $O(n\log n)$. 
	On the other hand, ignoring all edges of weight below $\delta\leq \eps W_{\max}/n^2$ can decrease the MWM, $w(M^*)$, by at most $n^2\delta \leq \eps W_{\max}\leq \eps w(M^*)$; that is, a $(1-\eps)$ multiplicative term. 
	Moreover, for each vertex $v$ the edges of weight at most $\delta$ can increase $\phi_v$ by at most $n\delta = \eps W_{\max}/(2(1+\eps)n)$, thus decreasing the effective weight of edges of weight at least $\delta$ by no more than $(1+\eps)(\phi_u+\phi_{v}) \leq \eps W_{\max}/n\leq \eps w(M^*)/n$. The weight of the maximum weight matching $M^*$ under this new weight function is therefore at most $(1-\eps)$ smaller than under $w$, so running \Cref{alg:PSAlg} on these weights yields a $2(1+O(\eps))$-approximate solution to the MWM under $w$.	
} 
To be precise, they keep $\gamma=\Theta(\log n/\eps)$ edges per node. Unfortunately, this also leads to quite a bit of complications in their analysis, as they need to adapt the local ratio theorem\cite{bar2001unified, bar2004local} to handle the loss. In a rough sense, their argument was that, per step, the process of limiting the queue size to $\gamma$ creates a loss of $(1-\exp(-\gamma))$ factor in the approximation, in the accountings of the local ratio theorem. Thus, over all the $O(n^2)$ edges in the stream, the loss is $(1-\exp(-\gamma))^{O(n^2)}$. The fact that $m$ could be $\Omega(n^2)$ is why they had to set $\gamma=\Theta(\log n)$ to make this loss negligible.

Handling the loss caused by this queue-limitation is much simpler with the simple arguments that we used in \Cref{sec:Analysis}. Furthermore, our analyses will allow us to curtail the per-node queue size to $\beta=O(1)$, while keeping the loss negligible.
We now address the loss due to queue size limitation in Lines \ref{line:big-queue-start}-\ref{line:big-queue-end}, starting with the following observation.

\begin{obs}\label[observation]{obs:DroppingWeights} Suppose that an edge $e=\{u, v\}$ in the stack gets removed from the stack because another edge $e'=\{u, v'\}$ was pushed to the stack later and made the size of the queue $Q(v)$ exceed $\beta$. Then, we say $e'$ \underline{evicted} $e$. In that case, $w'_{e'}\geq w'_e/\eps$ if $\eps\leq 1$. 
\end{obs}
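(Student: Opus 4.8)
The plan is to control how much the potential $\phi_z$ of the common endpoint of $e$ and $e'$ grows between the two times these edges are pushed onto the stack; here $z$ denotes the vertex whose queue $Q(z)$ overflowed (the statement's labelling just records that $e$ and $e'$ share this vertex). The first step is a purely combinatorial bookkeeping fact about the queue: for $e$ to be the edge dequeued from $Q(z)$ at the moment $e'$ is enqueued, $Q(z)$ must be at full capacity $\beta$ just before that enqueue, with $e$ as its oldest element. The other $\beta-1$ edges in $Q(z)$ at that moment are precisely the edges incident to $z$ that were pushed onto the stack strictly after $e$; indeed, no edge older than $e$ can have been evicted while $e$ was still present, so every such edge survives in $Q(z)$ until $e'$ arrives. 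Hence exactly $\beta-1$ edges incident to $z$ are pushed onto the stack between the pushes of $e$ and of $e'$.

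The second step chains \Cref{obs:exponentialIncrease}. Right after $e=\{z,\cdot\}$ is pushed, $\phi_z$ equals its previous nonnegative value plus $w'_e$, so $\phi_z\ge w'_e$ at that point. From then until $e'$ is pushed, $\phi_z$ is modified only at pushes of edges incident to $z$, and at each such push it is multiplied by a factor of at least $1+\eps$ by \Cref{obs:exponentialIncrease}; since there are $\beta-1$ intervening such pushes, just before $e'$ is pushed we have $\phi_z\ge (1+\eps)^{\beta-1}w'_e$. Finally, because $e'=\{z,y\}$ passes the test that allows it to be pushed, its pre-push values satisfy $w_{e'}\ge (1+\eps)(\phi_z+\phi_y)$, so
$$
w'_{e'} \;=\; w_{e'}-(\phi_z+\phi_y)\;\ge\;\eps\,\phi_z\;\ge\;\eps(1+\eps)^{\beta-1}\,w'_e.
$$

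It then remains to verify the numeric inequality $\eps(1+\eps)^{\beta-1}\ge 1/\eps$ for $\eps\le 1$ with the chosen $\beta$, i.e. $(1+\eps)^{\beta-1}\ge 1/\eps^{2}$; taking logarithms and using $\ln(1+\eps)\ge \eps/2$ for $\eps\in(0,1]$, it suffices that $(\beta-1)\cdot\eps/2\ge 2\ln(1/\eps)$, which holds comfortably for $\beta=\tfrac{6\log(1/\eps)}{\eps}+1$ (and absorbs any constant coming from the base of the logarithm). I do not anticipate a genuine obstacle; the only step needing care is the queue bookkeeping in the first paragraph — getting the count of intervening edges incident to $z$ exactly right and confirming that $\phi_z$ is untouched except at pushes of edges incident to $z$, so that \Cref{obs:exponentialIncrease} can legitimately be applied $\beta-1$ times in sequence.
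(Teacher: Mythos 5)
Your proof is correct and follows essentially the same route as the paper: count the $\beta-1$ intervening pushes at the shared endpoint, chain \Cref{obs:exponentialIncrease} to get $\phi_z \ge (1+\eps)^{\beta-1}w'_e$, and then use the acceptance test for $e'$ to conclude $w'_{e'} \ge \eps\phi_z \ge w'_e/\eps$. You also correctly flag (and route around) a labelling slip in the statement --- the shared endpoint of $e=\{u,v\}$ and $e'=\{u,v'\}$ is $u$, yet the statement refers to $Q(v)$; the paper's own proof silently switches to $\phi_u$, and your neutral notation $z$ handles this cleanly.
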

\begin{proof} Since $e$ was evicted by $e'$, there must have been $\beta-1$ edges incident on $u$ that arrived after $e$ (following which $\phi_u\geq w'_e$) but before $e'$ which were pushed into the stack. Hence, because of \Cref{obs:exponentialIncrease}, we have that before the arrival of $e'$, $\phi_u \geq (1+\eps)^{\beta-1} w'_{e} \geq w'_{e}/\eps^2$ (the last inequality holds because $\beta - 1 =  \frac{3\log(1/\eps)}{\eps}\geq \frac{2\log(1/\eps)}{\log(1+\eps)}$ for all $\eps\in (0,1]$).  But since $e'$ was added to the stack, we know that $w'_{e'} \geq \eps (\phi_u + \phi_v') \geq \eps \phi_u \geq w'_{e}/\eps$.
\end{proof}

The following recursive definition will prove useful when bounding the loss due to eviction of edges from the queue. 

\begin{Def}\label{def:discards}
	We say an edge $e'$ which was inserted into $S$ was \underline{discarded} if it was later removed from $S$,
	and say the edge was \underline{kept} otherwise.
	We say a discarded edge $e'$ was \underline{discarded by} a kept edge $e$ if $e'$ was evicted by $e$ or if $e'$ was evicted by an edge $e''$ which was itself later discarded by $e$. That is, there is a sequence of evictions which starts with $e'$ and ends in edge $e$ where in this sequence, each edge is evicted by the next edge in the chain. We denote the set of edges discarded by $e$ by $\mathcal{D}(e)$.
\end{Def}

We now bound the leftover weights of edges discarded by a given edge $e$.
\begin{lem}\label[lemma]{lem:weight-discarded}
	For all $\eps\leq 1/4$, for each edge $e\in E$, we have $w'_e\geq \frac{1}{4\eps}\cdot \sum_{e'\in \mathcal{D}(e)} w'_{e'}$.
\end{lem}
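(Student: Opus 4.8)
The plan is to set up the recursion implicit in Definition~\ref{def:discards} and sum a geometric series. Fix a kept edge $e$. Every edge $e'\in\mathcal{D}(e)$ reaches $e$ through a chain of evictions $e'=f_0\to f_1\to\cdots\to f_k=e$, where $f_{i+1}$ evicts $f_i$; by Observation~\ref{obs:DroppingWeights}, each such eviction multiplies the leftover weight by at least $1/\eps$, so $w'_{f_i}\le \eps^{\,k-i}\,w'_e$, and in particular $w'_{e'}\le\eps^{\,k}w'_e$ where $k$ is the length of the chain from $e'$ to $e$. So it suffices to control, for each $k\ge 1$, the number of edges in $\mathcal{D}(e)$ whose chain to $e$ has length exactly $k$, or more robustly to bound $\sum_{e'\in\mathcal{D}(e)}w'_{e'}$ by a weighted count of chains.

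The cleanest way I would carry this out is by a structural induction along the eviction forest. Define $D_k(e)$ to be the set of edges in $\mathcal{D}(e)$ at chain-distance exactly $k$ from $e$. An edge at distance $1$ from $e$ is one evicted directly by $e$; since an edge $e$ can evict at most one edge from each of its two endpoint-queues (each eviction of $Q(x)$ by $e$ happens at the single moment $e$ is enqueued into $Q(x)$), we get $|D_1(e)|\le 2$, and each such edge has leftover weight at most $\eps\,w'_e$ by Observation~\ref{obs:DroppingWeights}. More generally, the edges in $D_k(e)$ are exactly the edges in $D_1(f)$ ranging over $f\in D_{k-1}(e)$, so $|D_k(e)|\le 2^k$, and every $e'\in D_k(e)$ has $w'_{e'}\le\eps^k w'_e$. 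Therefore
\[
\sum_{e'\in\mathcal{D}(e)} w'_{e'} \;=\; \sum_{k\ge 1}\;\sum_{e'\in D_k(e)} w'_{e'} \;\le\; \sum_{k\ge 1} 2^k\,\eps^k\,w'_e \;=\; w'_e\sum_{k\ge 1}(2\eps)^k \;=\; w'_e\cdot\frac{2\eps}{1-2\eps}.
\]
For $\eps\le 1/4$ we have $\frac{2\eps}{1-2\eps}\le 4\eps$, which rearranges to $w'_e\ge\frac{1}{4\eps}\sum_{e'\in\mathcal{D}(e)}w'_{e'}$, as claimed.

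The main obstacle, and the place to be careful, is the bound $|D_1(e)|\le 2$ together with the claim that eviction chains form a genuine forest with bounded branching — i.e. that each discarded edge is "discarded by" a unique kept edge and that following the chain of evictions indeed terminates at a kept edge. This needs the observation that each edge is evicted at most once (it is removed from $S$ only once) and evicts at most one edge per endpoint, so the "evicted-by" relation is a functional graph whose sinks are exactly the kept edges; Definition~\ref{def:discards} then says $\mathcal{D}(e)$ is precisely the in-tree rooted at the kept edge $e$. One subtlety worth a sentence: when $e'$ is evicted by $e''$ which is later discarded by $e$, the distance of $e'$ to $e$ is one more than that of $e''$, so the distance bookkeeping above is consistent; and the leftover weights $w'_{(\cdot)}$ are fixed at insertion time and never change, so Observation~\ref{obs:DroppingWeights} composes cleanly along the chain. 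Everything else is the geometric-series computation already displayed.
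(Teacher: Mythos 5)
Your proof is correct and is essentially the same argument the paper gives: partition $\mathcal{D}(e)$ by eviction-chain length $k$, bound the count at depth $k$ by $2^k$ and the leftover weight there by $\eps^k w'_e$ via \Cref{obs:DroppingWeights}, and sum the geometric series $\sum_{k\ge 1}(2\eps)^k \le 4\eps$ for $\eps\le 1/4$. The only difference is cosmetic --- you spell out the forest structure of the eviction relation a bit more explicitly and evaluate the geometric series exactly before bounding it, whereas the paper bounds the series term-by-term --- so this matches the paper's proof.
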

%

\begin{proof} 
	The set $\mathcal{D}(e)$ contains at most two edges evicted by $e$ -- one for each endpoint of $e$. 
	By \Cref{obs:DroppingWeights}, we know that every such edge $e'$ evicted by $e$ satisfies $w'_{e}\geq w'_{e'}/\eps$.
	Similarly, any edge $e'$ evicted by $e$ can evict at most two edges in $\mathcal{D}(e)$, where each such edge $e''$ satisfies $w'_{e''}\leq \eps\cdot w'_{e'}\leq \eps^2 \cdot w'_{e}$, and so on, by induction. Generally, the edges of $\mathcal{D}(e)$ can be partitioned into sets of at most $2^i$ edges each for all $i\in \mathbb{N}$, where edges $e'$ in the $i$-th set have $w'_{e'}\leq \eps^{i}\cdot w'_e$. Summing over all these sets, we find that indeed, as $\eps\leq 1/4$, 
	$$
	\sum_{e'\in \mathcal{D}(e)} w'_{e'} \leq \sum_{i=1}^\infty (2\eps)^i\cdot w'_e \leq 2\eps\cdot \sum_{i=0}^\infty 2^{-i}\cdot w'_e  \leq 4\eps \cdot w'_e. \qedhere
	$$
\end{proof}

Combining the simple arguments of \Cref{sec:Analysis} and \Cref{lem:weight-discarded} we obtain the following.
\begin{thm}\label[theorem]{thm:improvedAlg}
	For all $\eps\leq \frac{1}{4}$, 
	\Cref{alg:OptSpaceAlg} is $2(1+6\eps)$-approximate.
\end{thm}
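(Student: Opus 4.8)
The plan is to combine the duality-based analysis of \Cref{sec:Analysis} with the new bound from \Cref{lem:weight-discarded}, treating the edges that get evicted from the stack as a controlled loss. First I would set up the dual-feasibility side. The $\phi_v$ values in \Cref{alg:OptSpaceAlg} evolve exactly as in \Cref{alg:PSAlg} --- evicting edges from the stack changes $S$ but never decreases any $\phi_v$ --- so \Cref{obs:approx-dual} and hence \Cref{cor:phi_ub_MWM} still apply verbatim: $w(M^*) \le (1+\eps)\sum_v \phi_v = (1+\eps)\sum_e \Delta\phi^e = 2(1+\eps)\sum_e w'_e$, where the last sum ranges over all edges ever pushed to the stack (discarded or kept).

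Next I would lower-bound $w(M)$ in terms of the leftover weights $w'_e$ of the \emph{kept} edges only. As in the proof of \Cref{lem:expAlg}, every kept edge that is not taken into $M$ has a previously-inspected neighboring (kept) edge in $M$, and by (the argument of) \Cref{lem:weight-vs-dual-change}, $w_e \ge \sum_{e'' \in \mathcal{P}(e),\, e'' \text{ kept}} w'_{e''}$; summing over $e \in M$ gives $w(M) \ge \sum_{e \text{ kept}} w'_e$. The remaining gap is the total leftover weight of discarded edges. Here is where \Cref{def:discards} and \Cref{lem:weight-discarded} enter: every discarded edge lies in $\mathcal{D}(e)$ for exactly one kept edge $e$ (the chain of evictions must terminate, and it terminates at a kept edge), so
\[
\sum_{e' \text{ discarded}} w'_{e'} = \sum_{e \text{ kept}} \sum_{e' \in \mathcal{D}(e)} w'_{e'} \le 4\eps \sum_{e \text{ kept}} w'_e \le 4\eps\, w(M),
\]
using \Cref{lem:weight-discarded} (valid since $\eps \le 1/4$) and the previous display. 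Therefore $\sum_{e \text{ pushed}} w'_e = \sum_{e \text{ kept}} w'_e + \sum_{e' \text{ discarded}} w'_{e'} \le (1+4\eps)\, w(M)$.

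Chaining the two bounds: $w(M^*) \le 2(1+\eps)\sum_{e\text{ pushed}} w'_e \le 2(1+\eps)(1+4\eps)\, w(M) \le 2(1+6\eps)\, w(M)$, where the last step uses $(1+\eps)(1+4\eps) = 1 + 5\eps + 4\eps^2 \le 1 + 6\eps$ for $\eps \le 1/4$. This gives the claimed $2(1+6\eps)$-approximation.

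I expect the only genuinely delicate point to be the bookkeeping claim that the discarded edges partition into the sets $\{\mathcal{D}(e) : e \text{ kept}\}$ --- i.e., that following the chain of evictions from any discarded edge always reaches a kept edge and does so uniquely. This follows because each eviction strictly postdates the previous one and there are finitely many edges, so the chain cannot cycle and must end at an edge that is never subsequently discarded; uniqueness is immediate since each edge is evicted by at most one edge. (One could alternatively carry out the blaming-based analysis of \Cref{sec:blamingSimple} directly, charging the extra $w'$-weight of evicted edges along the eviction chain, but routing it through \Cref{lem:weight-discarded} keeps the accounting cleanest.) Everything else is the substitution of inequalities already established in \Cref{sec:Analysis} and in \Cref{lem:weight-discarded}.
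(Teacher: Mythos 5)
Your proposal is correct and takes essentially the same route as the paper's duality-based proof: both chain \Cref{cor:phi_ub_MWM} (feasibility of $(1+\eps)\vec{\phi}$), \Cref{lem:weight-vs-dual-change} (relating $w(M)$ to the $w'$-weights of preceding edges), and \Cref{lem:weight-discarded} (bounding the $w'$-mass of discarded edges), together with the observation that $\{\mathcal{D}(e) : e \text{ kept}\}$ partitions the discarded edges. The only difference is bookkeeping order --- you bound $\sum_{\text{kept}} w'$ and $\sum_{\text{discarded}} w'$ separately and add, whereas the paper folds the discarded mass into a per-edge $(1+4\eps)$ factor $w_e(1+4\eps) \geq \sum_{e'\in\mathcal{P}(e)}(w'_{e'} + \sum_{e''\in\mathcal{D}(e')} w'_{e''})$ before summing over $e \in M$ --- and both routes yield the same $2(1+\eps)(1+4\eps) \leq 2(1+6\eps)$ bound.
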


In the following subsections we provide two simple proofs of this theorem.
\subsection{Duality-based Analysis}
We observe that \Cref{obs:approx-dual} (and consequently \Cref{cor:phi_ub_MWM}) as well as \Cref{lem:weight-vs-dual-change} hold for \Cref{alg:OptSpaceAlg}, just as they did for \Cref{alg:PSAlg}, as their proofs are unaffected by limiting of the queue sizes in Lines \ref{line:big-queue-start}-\ref{line:big-queue-end}, which is the only difference between these algorithms.

\begin{proof}[Proof of \Cref{thm:improvedAlg}]
	By Lemma \ref{lem:weight-vs-dual-change}, for each edge $e\in M$, we have
	$w_e \geq  \sum_{e'\in \mathcal{P}(e)} w'_{e'}.$
	On the other hand,  by Lemma \ref{lem:weight-discarded}, we have that 
	$4{\eps}\cdot w_e\geq 4{\eps}\cdot w'_e \geq  \sum_{e'\in \mathcal{D}(e)} w'_{e'}.$
	Therefore,		 
	$$
	w_e \cdot (1+4\eps) \geq \sum_{e'\in \mathcal{P}(e)} \left( w'_{e'} +  \sum_{e''\in \mathcal{D}(e')} w'_{e''} \right).$$
	
	But each kept edge $e'$ not added to $M$ is due to a kept edge $e$ which was added to $M$; that is, some $e$ such that $e'\in \mathcal{P}(e)$. Likewise, each discarded edge is discarded due to some kept edge. Consequently, the right hand side of the above expression summed over all edges of the output matching $M$ is precisely $\sum_{e'\in E} w'_{e'} = \frac{1}{2}\cdot \sum_{e'\in E} \Delta\phi^{e'} = \frac{1}{2}\cdot \sum_{v\in V} \phi_v$, which by Corollary \ref{cor:phi_ub_MWM} yields
	$$
	\sum_{e\in M} w_e \geq \frac{1}{2(1+4\eps)} \cdot \sum_{v\in V} \phi_v \geq \frac{1}{2(1+4\eps)(1+\eps)}\cdot w(M^*) \geq \frac{1}{2(1+6\eps)}\cdot w(M^*).\qedhere 
	$$
\end{proof}

\subsection{Blaming-based Analysis}

Again, we will continue with our style of \emph{blaming} as in \Cref{sec:blamingSimple}. We now blame the weight of any arbitrary matching $M'$ on the computed matching $M$, in a manner that each edge in $M$ receives at most $2(1+6\eps)$ factor of its weight as blame. 

\begin{proof}[Proof of \Cref{thm:improvedAlg}]
The proof is similar to \Cref{lem:expAlg} modulo one change: The only difference in the algorithm in comparison with the one analyzed in \Cref{lem:expAlg} is that now some edges have been evicted from the stack, because the queue size of one of their endpoints grew larger than $\beta$. Consider one edge $e_1=\{ u, v\}$ that was added to the stack, but later evicted from the stack. Now, the way that we find room for the blame of $e_1$ will be different. Let $e^*$ be the kept edge such that $e_1$ was discarded by $e^*$, i.e., $e_1\in \mathcal{D}(e^*)$. We find room for the  $2(1+\eps) w'_{e_1}$ units of blame of $e_1$ later when processing $e^*$. At that time, if we were to follow the proof of \Cref{lem:expAlg}, we would have been looking for $2(1+\eps) w'_{e^*}$ room for placing the blame of the edge $e^*$ on the computed matching. To prove \Cref{thm:improvedAlg}, when processing $e^*$, we will instead look for $2(1+6\eps) w'_{e^*}$ room for blame. This extra $10\eps w'_{e^*}$ units of blame is enough to account for the blame of all edges in $\mathcal{D}(e^*)$, i.e., all those edges similar to $e_1$ who were discarded by the kept edge $e^*$. The extra $10\eps w'_{e^*}$ units of blame suffices because the total blame for $\mathcal{D}(e^*)$ is at most $\sum_{e''\in \mathcal{D}(e^*)} 2(1+\eps) w'_{e''} \leq 4\eps 2(1+\eps) w'_{e^*} \leq 10\eps w'_{e^*}$. Here, the first inequality follows from \Cref{lem:weight-discarded}. Even though now when processing $e^*$ we are looking for $2(1+6\eps) w'_{e^*}$ room for blame, we can do that now because in this proof each edge has room for blame equal to a $2(1+6\eps)$ factor of its weight, instead of a $2(1+\eps)$ factor of its weight as in \Cref{lem:expAlg}. The rest of the proof is identical to \Cref{lem:expAlg}.
\end{proof}

As the processing time per edge of \Cref{alg:OptSpaceAlg} is clearly $O(1)$, we obtain the following.
\begin{thm}
	For any $\eps>0$, there exists a semi-streaming algorithm computing a $(2+\eps)$-approximation for MWM, using $O(n \log n \cdot \frac{\log(1/\eps)}{\eps})$ space and $O(1)$ processing time.
\end{thm}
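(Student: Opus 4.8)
The plan is to assemble the final theorem from the three pieces already in place, treating it essentially as a bookkeeping corollary rather than a new argument. First I would invoke \Cref{thm:improvedAlg}, which states that for all $\eps \le \frac14$, \Cref{alg:OptSpaceAlg} is $2(1+6\eps)$-approximate. Given a target accuracy parameter $\eps' > 0$, I would run \Cref{alg:OptSpaceAlg} with the internal parameter set to $\eps = \min\{\eps'/12, 1/4\}$ (or any such constant fraction), so that $2(1+6\eps) \le 2 + \eps'$; I would also note that for $\eps' \ge $ some constant the claim is trivial since a greedy $2$-approximation already suffices, so we may assume $\eps'$ small and $\eps = \eps'/12 \le 1/4$. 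After relabeling $\eps'$ back to $\eps$, this gives the stated approximation ratio $2+\eps$.

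The second piece is the space bound. Here I would argue that \Cref{alg:OptSpaceAlg} stores: the $n$ counters $\phi_v$, each $O(\log n)$ bits under the $\poly(n)$-weight assumption (and, for unbounded $W$, one invokes the reduction of the footnote to narrow the weight range to $W' = O(n^2/\eps)$ first); the $n$ queues $Q(v)$, each capped at $\beta = \frac{3\log(1/\eps)}{\eps}+1$ edge identifiers, contributing $O(n\beta\log n)$ bits in total; and the stack $S$, which by construction contains at most $\beta$ edges incident to any vertex, hence at most $O(n\beta)$ edges overall, again $O(n\beta\log n)$ bits. Substituting $\beta = O(\log(1/\eps)/\eps)$ yields the claimed $O(n\log n \cdot \frac{\log(1/\eps)}{\eps})$ space. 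I would be slightly careful to state that the stack and queues share edge storage (or that double-counting only affects constants), and that the weight of $w'_e$ need not be stored since the output phase only needs the stacked edges' identities.

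The third piece is the running time: processing each stream edge $e = \{u,v\}$ requires a constant-time comparison against $(1+\eps)(\phi_u + \phi_v)$, a constant number of arithmetic updates to $\phi_u, \phi_v$, a push onto $S$, and two enqueue/possible-dequeue operations on $Q(u), Q(v)$, each $O(1)$ with standard queue and stack data structures; removing an evicted edge from $S$ is $O(1)$ if $S$ is maintained as a doubly linked list with back-pointers from queue entries. The unwinding phase costs $O(1)$ per stacked edge, hence $O(n\beta)$ total, which is dominated. So the per-edge processing time is $O(1)$ as claimed.

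I do not expect any real obstacle — all the mathematical content is in \Cref{thm:improvedAlg}, \Cref{lem:weight-discarded}, and the earlier analyses, and this final statement is a packaging step. The only mild subtlety is making the implementation details (linked-list stack with cross-pointers so that mid-stack eviction is genuinely $O(1)$, and handling the unbounded-$W$ case via the footnote reduction so that all stored numbers fit in $O(\log n)$ bits) explicit enough that the space and time bounds are fully justified; I would state these briefly rather than belabor them.
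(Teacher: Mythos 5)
Your proposal is correct and follows exactly the route the paper takes: the paper presents this theorem as an immediate packaging of \Cref{thm:improvedAlg} together with the $O(n\beta\log n)$ space bound from the queue cap $\beta = O(\log(1/\eps)/\eps)$ and the evident $O(1)$ per-edge processing. You simply spell out the implementation details (rescaling $\eps$, linked-list stack for $O(1)$ mid-stack eviction, the footnote reduction for unbounded $W$) that the paper leaves implicit.
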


\paragraph{Storing $\bm{\phi_v}$ values} In the paper we implicitly assumed the maximum edge weight $W$ is $\textrm{poly}(n)$, implying the $n$ dual variables $\phi_v$ can be stored using $O(n\log n)$ bits. For general $W$, this space requirement is $\Omega(n \log W)$. We briefly outline how to improve this space usage to $O(\log \log W + n\log n)$ bits by only keeping the \emph{ratio} of these $\phi_v$ and the maximum weight observed so far, $W_{\max}$. First, by rounding all edge weights down to the nearest integer power of $(1+\eps)$ (increasing the approximation ratio by at most a $(1+\eps)$ multiplicative factor in the process), we can store $W_{\max}$ by simply storing $\log_{1+\eps} W$, using $O(\log \log_{1+\eps} W) = O\big(\log \log W + \log(1/\eps))$~bits. 
Next, upper bounding the contribution of edge weights below $\eps W_{\max}/n^2$ to $\phi_v$ by $\eps W_{max}/n$, we can store $\phi_v$ using a bit array representing the $O(\log_{1+\eps} (n^2/\eps)) = O\big(\frac{\log n + \log(1/\eps)}{\eps}\big)$ possible values summed this way by $\phi_v$. (Note that the values summed by $\phi_v$ are all distinct by \Cref{obs:exponentialIncrease} and rounding of weights to powers of $(1+\eps)$.)
Arguments similar to those of Footnote \ref{unbounded-W} 
show that this approach keeps the $(2+O(\eps))$ approximation guarantee, while by the above it only requires $O(\log \log W + n\log n)$ bits of memory for any constant $\eps > 0$. That is, for any $W=2^{2^{O(n \log n)}}$, this is still $O(n \log n)$ bits.

\paragraph{Acknowledgment} Mohsen Ghaffari is grateful to Gregory Schwartzman for sharing a write-up of \cite{PS17-StreamingMatching}, and to Ami Paz and Gregory Schwartzman for feedback on an earlier draft of this article. The work of David Wajc is supported in part by NSF grants CCF-1618280, CCF-1814603, CCF-1527110 and NSF CAREER award CCF-1750808.




\bibliography{ref}

\end{document}